\documentclass[a4paper,UKenglish,cleveref, autoref, thm-restate]{lipics-v2021}

\pdfoutput=1 
\hideLIPIcs  


\usepackage{tikz-cd}

\newcommand{\Q}{\mathbb{Q}}

\newcommand{\N}{\mathbb{N}}

\newcommand{\Z}{\mathbb{Z}}
\def\crys{{\text{\upshape cris}}}
\newcommand*{\lbra}{\mathopen{[\![}}
\newcommand*{\rbra}{\mathclose{]\!]}}
\def\coeff{\operatorname{coeff}\nolimits}
\def\eval{\operatorname{eval}\nolimits}

\theoremstyle{definition}
\newtheorem{examples}[theorem]{Examples}

\usepackage{color}
\definecolor{lightblue}{rgb}{0.8,0.85,1}
\usepackage{listings}
\lstset{basicstyle=\small\ttfamily,%
	backgroundcolor=\color{lightblue},%
	frame=single,framerule=0pt,xleftmargin=\fboxsep,xrightmargin=\fboxsep}
\definecolor{keywordcolor}{rgb}{0.7, 0.1, 0.1}   
\definecolor{tacticcolor}{rgb}{0.0, 0.1, 0.6}    
\definecolor{commentcolor}{rgb}{0.4, 0.4, 0.4}   
\definecolor{symbolcolor}{rgb}{0.0, 0.1, 0.6}    
\definecolor{sortcolor}{rgb}{0.1, 0.5, 0.1}      
\definecolor{attributecolor}{rgb}{0.7, 0.1, 0.1} 

\newcommand{\code}[1]{\lstinline{#1}}

\newcommand{\extlink}{~\ensuremath{{}^\text{\faExternalLink*}}}

\lstset{language=lean}

\bibliographystyle{plainurl}

\title{A Formalization of Divided Powers in Lean} 


\author{Antoine {Chambert-Loir}\footnotemark[1]}{Université Paris Cité, Institut de mathématiques de Jussieu -- Paris rive gauche, France \and \url{https://webusers.imj-prg.fr/~antoine.chambert-loir/} }{antoine.chambert-loir@u-paris.fr}{https://orcid.org/0000-0001-8485-7711}{The author acknowledges the support of the APRAPRAM \'Emergence project of Université Paris Cité.}

\author{María Inés {de Frutos-Fernández}\footnotemark[1]}{Universit\"at Bonn, Germany \and \url{https://mariainesdff.github.io/}}{midff@math.uni-bonn.de}{https://orcid.org/0000-0002-5085-7446}{The author was funded by the Deutsche Forschungsgemeinschaft (DFG, German
	Research Foundation) under Germany's Excellence Strategy -- EXC-2047/1 --
	390685813, and by the Grant
	CA1/RSUE/2021-00623 funded by the Spanish Ministry of
	Universities, the Recovery, Transformation and Resilience
	Plan, and Universidad Autónoma de Madrid. Part of this work was conducted during the HIM
	trimester programme Prospects of Formal Mathematics, also funded by EXC-2047/1 – 390685813.}

\authorrunning{A. Chambert-Loir and M.\,I. de Frutos-Fernández} 

\Copyright{Antoine Chambert-Loir and María Inés de Frutos-Fernández} 

\begin{CCSXML}
	<ccs2012>
	<concept>
	<concept_id>10003752.10003790.10002990</concept_id>
	<concept_desc>Theory of computation~Logic and verification</concept_desc>
	<concept_significance>500</concept_significance>
	</concept>
	<concept>
	<concept_id>10003752.10003790.10011740</concept_id>
	<concept_desc>Theory of computation~Type theory</concept_desc>
	<concept_significance>500</concept_significance>
	</concept>
	</ccs2012>
\end{CCSXML}

\ccsdesc[500]{Theory of computation~Logic and verification}
\ccsdesc[500]{Theory of computation~Type theory} 

\keywords{Formal mathematics, algebraic number theory, commutative algebra, divided powers, Lean, Mathlib} 

\category{} 

\relatedversion{} 

\supplement{}

\supplementdetails[subcategory={}, cite={}, swhid={}]{Software}{https://github.com/mariainesdff/divided_powers_journal}


\funding{This work was started during the ICERM workshop `Lean for the Curious Mathematician 2022', funded by the US National Science Foundation under Grant No. DMS-1929284.}

\acknowledgements{We are grateful to Kevin Buzzard for suggesting this project and for his ongoing support during its completion.
We would also like to thank Anatole Dedecker, Floris van Doorn and Bhavik Mehta for interesting discussions about the project. Finally, we thank the Mathlib community and reviewers for useful feedback on our contributions to the library.}

\nolinenumbers 

\EventEditors{Yannick Forster and Chantal Keller}
\EventNoEds{2}
\EventLongTitle{16th International Conference on Interactive Theorem Proving (ITP 2025)}
\EventShortTitle{ITP 2025}
\EventAcronym{ITP}
\EventYear{2025}
\EventDate{September 29--October 2, 2025}
\EventLocation{Reykjavik, Iceland}
\EventLogo{}
\SeriesVolume{352}
\ArticleNo{4}

\begin{document}

\maketitle

\begin{abstract}
Given an ideal $I$ in a commutative ring $A$, a divided power structure on $I$ is a collection of maps $\{\gamma_n \colon I \to A\}_{n \in \mathbb{N}}$,
subject to axioms that imply that it behaves like the family $\{x \mapsto \frac{x^n}{n!}\}_{n \in \mathbb{N}}$, but which can be defined even when division by factorials is not possible in $A$. Divided power structures have important applications in diverse areas of mathematics, including algebraic topology, number theory and algebraic geometry.

In this article we describe a formalization in Lean 4 of the basic theory of divided power structures, including divided power morphisms and sub-divided power ideals, and we provide several fundamental constructions, in particular quotients and sums. This constitutes the first formalization of this theory in any theorem prover.

As a prerequisite of general interest, we expand the formalized theory of multivariate power series rings, endowing them with a topology and defining evaluation and substitution of power series.
\end{abstract}

\footnotetext[1]{Both authors contributed equally to this research.}

\section{Introduction}
\label{sec:intro}

The theory of divided powers, invented by Cartan in~\cite{Car},
grew out of the observation that there are algebraic contexts where
one can make sense of ratios of the form $h^k/k!$, even though division by integers such as $k!$ is not a priori possible in an arbitrary ring. Cartan also found out the relevant functional equations satisfied by these ratios.

The context of Cartan's work is algebraic topology.
In~\cite{Car}, the elements~$h$ are classes of even degree 
in a graded commutative Eilenberg--Maclane algebra.
While such algebras are not commutative,
the part where the divided powers take place is essentially commutative,
and the theory has not been elaborated further
for noncommutative rings.
On the other hand,  the case of commutative rings  has been the object
of a detailed inquiry, 
starting from the works of Roby~\cite{Roby63,Roby65,Roby66}. For this reason, here we only consider commutative rings.

In the 1960s, under the impulse of Grothendieck, algebraic geometers tried to understand the relations between the various cohomology theories that could be defined on algebraic varieties. It was soon observed that the ``algebraic de Rham cohomology'' led to technical difficulties, essentially because the primitive of a polynomial \( \sum a_n T^n\) is
given by the expression \(\sum \frac1{n+1} a_n T^{n+1}\), which involves integer denominators.
One of the suggestions made by Grothendieck was to work, not with polynomials, but in a ``divided power algebra'', in which all ratios \(T^n/n!\) make sense.
This suggestion was followed by Berthelot~\cite{Ber74} for his development of crystalline cohomology. 

Despite their complexity, crystalline cohomology and subsequent $p$-adic cohomologies are fundamental to modern arithmetic geometry. 
In particular, they are one of the multiple tools used by Wiles
and his collaborators in the proof of Fermat's Last Theorem.
The present formalization of divided powers
is thus one step in the formalization of this theorem {(see Section \ref{par:Hodge} for more technical details)}. 

In this article we formalize in Lean 4 the basic theory of divided power rings, divided power morphisms and sub-divided power ideals. We also provide several constructions of divided powers, including quotients and sums. To the authors' knowledge, this is the first formalization of the theory of divided powers in any theorem prover.

As a prerequisite, which has applications far beyond the scope of this project, we formalize the topological theory of multivariate power series rings and make it possible to evaluate and substitute power series.

Our work constitutes the first formalization of these topological aspects of multivariate power series, but related work exists in several theorem provers. The ring of multivariate power series (as an abstract ring) has been formalized in Mizar \cite{CAMizar}, while the topological ring of univariate power series has been formalized in Isabelle/HOL \cite{FPSIsa}. In Rocq, a formalization of the abstract ring of univariate power series is available in the UniMath library \cite{padicsRocq}, and a separate implementation built over the MathComp library is described in \cite{ACRocq}. Note that none of these include a formalization of evaluation or substitution of univariate or multivariate formal power series.

\subsection{Lean and Mathlib}
\label{subsec:lean}
This formalization project uses the theorem prover Lean \cite{Lean3, Lean4} and builds over Lean's mathematical library Mathlib \cite{Mathlib}, which was ported to the latest version of the language (Lean 4) in 2023. While the project started in Lean 3 and the first Mathlib contributions from it were made to Mathlib3, after the library port we translated our existing work to Lean 4 and used the latter version for the rest of the development.

Throughout the text, we use the clickable symbol {\faExternalLink*} to link to the source code for the formalization. The original work described in this article comprises about 8,000 lines of code, of which approximately 5,000 have been merged into Mathlib; the rest of our code can be found in the {public GitHub repository \code{mariainesdff/divided_powers_journal}\href{https://github.com/mariainesdff/divided_powers_journal}{\extlink}}. We keep a record of past and ongoing contributions to Mathlib arising from this work in the Lean prover community project `Divided Powers' \href{https://github.com/orgs/leanprover-community/projects/25/views/1}{\extlink}.

\subsection{Paper Outline}
\label{subsec:outline}
In \S\ref{sec:dp}, we describe the general theory of divided power structures: after introducing the definition and some key examples, we treat divided power morphisms in \S\ref{subsec:dpmorph}, sub-divided power ideals in \S\ref{subsec:sub-dpi}, and divided powers on quotient rings in \S\ref{subsec:quot}. \S\ref{sec:exp} is devoted to the formalization of the module of exponential power series: in \S\ref{subsec:topology} and \S\ref{subsec:linear-top} we describe the topological theory of multivariate power series rings, which is needed in \S\ref{subsec:eval-subst} to define evaluation and substitution operations and in 
\S\ref{subsec:exp} to describe the exponential module, which is later used in \S\ref{subsec:ideal-add} to define divided powers on sums of ideals. Finally, we conclude with some implementation remarks in \S\ref{subsec:impl} and a description of future work in \S\ref{subsec:future}.

\section{Divided Power Structures}\label{sec:dp}

\subsection{Definition}\label{subsec-dp-def}

\begin{definition}[{\cite[Definition 3.1]{BO78}}]\label{def:dp}
Let $I$ be an ideal in a commutative ring $A$.
A \emph{divided power structure on $I$} is a family of maps $\gamma_n : I \to A$ for $n \in \mathbb{N}$ such that
\begin{enumerate}[(i)]
	\item $\forall x \in I$, $\gamma_0 (x) = 1$.
	\item $\forall x \in I$, $\gamma_1 (x) = x$.
	\item $\forall x \in I, \forall n > 0, \gamma_n (x) \in I$.
	\item $\forall x, y \in I, \gamma_n (x + y) = \sum_{i + j=n} {\gamma_i(x) \cdot \gamma_j(y)}$.
	\item $\forall a \in A, \forall x\in I, \gamma_n (a \cdot x) = a^n \cdot \gamma_n(x)$.
	\item $\forall x \in I, \forall m, n \in \mathbb{N}, \gamma_m (x) \cdot \gamma_n (x) = \binom{m + n}m \cdot \gamma_{m + n} (x)$.
	\item $\forall x \in I, \forall m \in \mathbb{N}, \forall n > 0, \gamma_m (\gamma_n (x)) = \frac{(m \cdot n)!}{m!(n!)^m} \cdot \gamma_{m \cdot n} (x)$.
\end{enumerate}
If $I$ is an ideal equipped with a divided power structure, we say that $I$ \emph{has divided powers}, that $(I, \gamma)$ is a \emph{divided power ideal}, and that $(A, I, \gamma)$ is a \emph{divided power ring} or a \emph{divided power algebra}. 
\end{definition}

\begin{lstlisting}[caption={Definition of divided power structure.},float=h, label=code:dp,captionpos=t,abovecaptionskip=-\medskipamount]
structure DividedPowers {A : Type*} [CommSemiring A] (I : Ideal A) where
  dpow : ℕ → A → A
  dpow_null : ∀ {n x} (_ : x ∉ I), dpow n x = 0
  dpow_zero : ∀ {x} (_ : x ∈ I), dpow 0 x = 1
  dpow_one : ∀ {x} (_ : x ∈ I), dpow 1 x = x
  dpow_mem : ∀ {n x} (_ : n ≠ 0) (_ : x ∈ I), dpow n x ∈ I
  dpow_add : ∀ (n) {x y} (_ : x ∈ I) (_ : y ∈ I),
    dpow n (x + y) = (antidiagonal n).sum fun k ↦ dpow k.1 x * dpow k.2 y
  dpow_mul : ∀ (n) {a : A} {x} (_ : x ∈ I),
    dpow n (a * x) = a ^ n * dpow n x
  mul_dpow : ∀ (m n) {x} (_ : x ∈ I),
    dpow m x * dpow n x = choose (m + n) m * dpow (m + n) x
  dpow_comp : ∀ (m) {n x} (_ : n ≠ 0) (_ : x ∈ I),
    dpow m (dpow n x) = uniformBell m n * dpow (m * n) x
\end{lstlisting}

When $A$ is a $\Q$-algebra, one may set $\gamma_n(x)=x^n/n!$
for all $x\in I$ and these axioms are easy to check.
In particular, axiom~(iv) is a version of the binomial theorem without binomial coefficients.

\begin{remark}
The fraction $\frac{(m \cdot n)!}{m!(n!)^m}$ appearing in condition (vii) is a natural number for all $n, m \in \N$; we formalize this definition as \code{Nat.uniformBell}\footnote[2]{The terminology comes from that of Bell numbers in combinatorics,
	which represent the number of partitions
	of a set of given cardinality; here, we fix one particular partition type.}\href{https://github.com/leanprover-community/mathlib4/blob/951bd16d55870f47b2e2cd98c794f933e29bd2b5/Mathlib/Combinatorics/Enumerative/Bell.lean#L121C5}{\extlink}. This quantity has a combinatorial significance: it counts the number of ways to divide a set with $m \cdot n$ elements into $m$ groups of $n$-element subsets.

\end{remark}

We formalize Definition \ref{def:dp} as a structure \lstinline{DividedPowers}\href{https://github.com/leanprover-community/mathlib4/blob/951bd16d55870f47b2e2cd98c794f933e29bd2b5/Mathlib/RingTheory/DividedPowers/Basic.lean#L73}{\extlink} (see Listing \ref{code:dp}), where the field \lstinline{dpow} represents the family of functions $\{\gamma_n\}_{n \in \N}$. Note that, instead of defining \lstinline{dpow} as a function $\N \to I \to A$, we define it as a function $\N \to A \to A$ and impose in the field \lstinline{dpow_null} the condition that this map is zero outside the ideal $I$ (see \S\ref{subsubsec:dpow} for a discussion of this choice).

In our formalized definition, we only require $A$ to be a commutative semiring instead of a commutative ring; that is, we do not assume that $A$ has a negation operator. However, as mentioned in the introduction, it does not seem that a general,
and relevant, definition of divided powers has been given
in the noncommutative setting,
so we do not relax the commutativity hypothesis.
 
In the field \lstinline{dpow_add}, the expression \lstinline{(antidiagonal n).sum fun k ↦ dpow k.1 x * dpow k.2 y} represents the sum $\sum_{k_1 + k_2 =n} {\gamma_{k_1}(x) \cdot \gamma_{k_2}(y)}$\href{https://github.com/leanprover-community/mathlib4/blob/951bd16d55870f47b2e2cd98c794f933e29bd2b5/Mathlib/Algebra/Order/Antidiag/Prod.lean#L55}{\extlink}, while in \lstinline{dpow_comp}, \lstinline{uniformBell m n} represents the fraction $\frac{(m \cdot n)!}{m!(n!)^m}$.

\subsection{Examples of Divided Power Structures}\label{subsec-dp-ex}
We formalized several examples of divided power structures, starting with the trivial one.
\begin{example}\label{ex:dp-zero}
In any commutative ring $A$, the trivial ideal $(0)$ is a divided power ideal with respect to the family $\gamma_n(0) = 0$ for all $n \ge 1$\href{https://github.com/leanprover-community/mathlib4/blob/951bd16d55870f47b2e2cd98c794f933e29bd2b5/Mathlib/RingTheory/DividedPowers/Basic.lean#L91}{\extlink}.
\end{example}

We next consider several ideals for which the family $\gamma_n(x) = \frac{x^n}{n!}$ is a divided power structure. We start by making the definition \lstinline{OfInvertibleFactorial.dpow}\href{https://github.com/mariainesdff/divided_powers_journal/blob/a8c64b021200b8c2cee79296881b5c0a3379092a/DividedPowers/RatAlgebra.lean#L77}{\extlink}, where we represent $\frac{x^n}{n!}$ by 
\lstinline{Ring.inverse (n ! : A) * x ^ n}. Note that \lstinline{Ring.inverse} is defined for any ring $A$ as the function sending \lstinline{x : A} to its inverse, if it exists, or to $0$, otherwise. Hence we can write down this definition without any extra hypothesis on $A$ or $I$. However, for it to define a divided power structure, some extra conditions are required.

\begin{lstlisting} 
variable {A : Type*} [CommSemiring A] (I : Ideal A) 

def OfInvertibleFactorial.dpow : ℕ → A → A := 
  fun n x ↦ if x ∈ I then Ring.inverse (n ! : A) * x ^ n else 0 
\end{lstlisting} 

\begin{example}\label{ex:dp-rat}
If $A$ is a $\Q$-algebra, then any ideal in $A$ has a unique divided power structure, given by the family $\gamma_n(x) = \frac{x^n}{n!}$\href{https://github.com/mariainesdff/divided_powers_journal/blob/78f8956f06816221781a16b5223753c97b9b641c/DividedPowers/RatAlgebra.lean#L273}{{\extlink}}.
\end{example}

Another sufficient condition for $\gamma_n(x) = \frac{x^n}{n!}$ to be a divided power structure on an ideal $I$ is the existence of a natural number $n$ such that $(n-1)!$ is invertible in $A$ and $I^n = 0$. We call
this divided power structure \lstinline{OfInvertibleFactorial.dividedPowers}\href{https://github.com/mariainesdff/divided_powers_journal/blob/78f8956f06816221781a16b5223753c97b9b641c/DividedPowers/RatAlgebra.lean#L193}{\extlink}.

\begin{lstlisting}
def OfInvertibleFactorial.dividedPowers {n : ℕ}
    (hn_fac : IsUnit ((n - 1).factorial : A)) (hnI : I ^ n = 0) : 
    DividedPowers I where
  dpow := OfInvertibleFactorial.dpow I
  ...
\end{lstlisting}

In particular, we can specialize this definition to obtain three interesting examples.
\begin{examples}\label{ex:dp-nil}
	\begin{enumerate}
		\item If $A$ is any commutative ring and $I \subseteq A$ is any ideal such that $I^2 = 0$, then 
		$\gamma_n(x) = \frac{x^n}{n!}$ is a divided power structure on $I$, because $(2-1)! = 1$ is always invertible\href{https://github.com/mariainesdff/divided_powers_journal/blob/78f8956f06816221781a16b5223753c97b9b641c/DividedPowers/RatAlgebra.lean#L217}{\extlink}.
		\item If $p$ is a prime number which is nilpotent in $A$ (that is, $p^n = 0$ in $A$ for some natural number $n$), and $I^p = 0$, then $\gamma_n(x) = \frac{x^n}{n!}$ is a divided power structure on $I$, since $(p-1)!$ is a unit in $A$\href{https://github.com/mariainesdff/divided_powers_journal/blob/78f8956f06816221781a16b5223753c97b9b641c/DividedPowers/RatAlgebra.lean#L235}{\extlink}.
		\item Recall that the \emph{characteristic} of $A$ is the smallest positive natural number $p$ such that $p$ copies of $1 \in A$ add up to $0 \in A$, if such a number exists, or zero otherwise. 
		If the characteristic of $A$ is a prime number $p$ and $I^p = 0$, then $\gamma_n(x) = \frac{x^n}{n!}$ is a divided power structure on $I$. This follows as a special case of the previous example, since the characteristic of a ring is a nilpotent element\href{https://github.com/mariainesdff/divided_powers_journal/blob/78f8956f06816221781a16b5223753c97b9b641c/DividedPowers/RatAlgebra.lean#L253}{\extlink}.

	\end{enumerate}
\end{examples}

Finally, we consider an example which is of key importance in number theory. For each prime number $p$, we can define a \emph{$p$-adic absolute value} on the rational numbers; the completion of $\Q$ with respect to this absolute value is the field $\Q_p$ of \emph{$p$-adic numbers}. The subring of $\Q_p$ consisting of elements with absolute value at most one is the ring $\Z_p$ of \emph{$p$-adic integers}. The ring $\Z_p$ has a unique maximal ideal, generated by $p$; this ideal consists of the elements of absolute value strictly smaller than one.

\begin{example}\label{ex:padic}
The family $\gamma_n(x) = \frac{x^n}{n!}$ is a divided power structure on the $\Z_p$-ideal~$(p)$\href{https://github.com/mariainesdff/divided_powers_journal/blob/78f8956f06816221781a16b5223753c97b9b641c/DividedPowers/Padic.lean#L213}{\extlink}. The key part of the proof is to show that, for every $x \in (p)$ and every nonzero~$n$, the fraction $\gamma_n(x) = \frac{x^n}{n!}$ has $p$-adic absolute value less than one, and hence belongs to the ideal~$(p)$.
Then, the verification of the axioms of a divided power structure can be checked in~$\Q_p$, and they follow from Example \ref{ex:dp-rat} since $\Q_p$ is a $\Q$-algebra. 
\end{example}

We formalize Example \ref{ex:padic} as a special case of a more general construction: if $f : S \to A$ is a ring homomorphism and $J$ is an $S$-ideal such that $\text{span}(f(I)) = J$ and such that $f^{-1}(\gamma_n(f(x)) \in I$ for all $n \ne 0, x \in I$, then $f$ induces a divided power structure on $I$\href{https://github.com/mariainesdff/divided_powers_journal/blob/78f8956f06816221781a16b5223753c97b9b641c/DividedPowers/Padic.lean#L70}{\extlink}.

\subsection{Divided Power Morphisms}
\label{subsec:dpmorph}
Let $(A, I, \gamma)$ and $(B,J,\delta)$ be two divided power rings.

\begin{definition}[{\cite[\S3]{BO78}}]\label{def:dpmorph}
A \emph{divided power morphism} $f : (A, I, \gamma) \to (B,J,\delta)$ is a ring homomorphism $f : A \to B$ such that $f(I) \subseteq J$ and $\delta_n (f(x)) = f (\gamma_n(x))$ for all $n \in \mathbb{N}, x \in I$.
\end{definition}

We provide two formalized versions of this definition: a propositional one, which we call \lstinline{IsDPMorphism}\href{https://github.com/leanprover-community/mathlib4/blob/951bd16d55870f47b2e2cd98c794f933e29bd2b5/Mathlib/RingTheory/DividedPowers/DPMorphism.lean#L58}{\extlink}, as well as a bundled version \lstinline{DPMorphism}\href{https://github.com/leanprover-community/mathlib4/blob/951bd16d55870f47b2e2cd98c794f933e29bd2b5/Mathlib/RingTheory/DividedPowers/DPMorphism.lean#L97}{\extlink} which contains both the map $f : A \to B$ and the properties it must satisfy.

\begin{lstlisting}
structure DividedPowers.IsDPMorphism (f : A →+* B) : Prop where
  ideal_comp : I.map f ≤ J
  dpow_comp : ∀ {n}, ∀ a ∈ I, hJ.dpow n (f a) = f (hI.dpow n a)
  
structure DividedPowers.DPMorphism extends RingHom A B where
  ideal_comp : I.map toRingHom ≤ J
  dpow_comp : ∀ {n}, ∀ a ∈ I, hJ.dpow n (toRingHom a) = toRingHom (hI.dpow n a)
\end{lstlisting}

While the unbundled definition is more convenient to develop the basic theory of divided power morphisms, we anticipate that the bundled version will be useful for future work towards formalizing the crystalline site, so we provide it as well.

We formalize several fundamental results about divided power morphisms, including the following propositions.

\begin{proposition}[{\cite[Proposition 2]{Roby65}}]\label{prop:roby65-prop2}
Let 
$f : A \to B$ be a ring homomorphism such that $f(I) \subseteq J$. Then the set of elements $x \in A$ such that $\delta_n (f(x)) = f (\gamma_n(x))$ for all $n \in \N$ is an ideal of $A$\href{https://github.com/leanprover-community/mathlib4/blob/951bd16d55870f47b2e2cd98c794f933e29bd2b5/Mathlib/RingTheory/DividedPowers/DPMorphism.lean#L134}{\extlink}.
\end{proposition}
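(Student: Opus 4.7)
The plan is to verify the three ideal axioms (containing $0$, closure under addition, closure under $A$-multiplication) directly, using the divided power axioms (iv) and (v) and the fact that $f$ is a ring homomorphism. Let $S$ denote the set in question. Since $f(I) \subseteq J$, whenever $x \in I$ we have $f(x) \in J$, so the divided power axioms on $J$ are applicable at $f(x)$.

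That $0 \in S$ follows directly: at $n=0$, both $\gamma_0(0)$ and $\delta_0(0)$ equal $1$, and $f(1)=1$; for $n \geq 1$, axiom (v) applied with $a=0$ yields $\gamma_n(0) = 0^n \gamma_n(0) = 0$ and likewise $\delta_n(0) = 0$, so both sides vanish. For closure under addition, let $x, y \in S$ (which forces $x, y \in I$, as the $n=0$ case of the defining equation pins down membership via \code{dpow_null}). Then axiom (iv), applied to both $\gamma$ and $\delta$, together with $f$ being a ring homomorphism, produces the chain
\begin{align*}
\delta_n(f(x+y)) &= \delta_n(f(x)+f(y)) = \sum_{i+j=n} \delta_i(f(x))\, \delta_j(f(y)) \\
&= \sum_{i+j=n} f(\gamma_i(x))\, f(\gamma_j(y)) = f\!\left(\sum_{i+j=n} \gamma_i(x)\,\gamma_j(y)\right) = f(\gamma_n(x+y)).
\end{align*}
For closure under multiplication by $a \in A$ with $x \in S$, axiom (v) analogously gives
\[
\delta_n(f(ax)) = \delta_n(f(a)\,f(x)) = f(a)^n\, \delta_n(f(x)) = f(a^n)\, f(\gamma_n(x)) = f(a^n\, \gamma_n(x)) = f(\gamma_n(ax)).
\]

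The main obstacle is bookkeeping rather than mathematical depth: in the formalization one must manage the extension-by-zero convention on $\gamma_n$ and $\delta_n$ outside $I$ and $J$, threading through the membership hypotheses (using \code{dpow_null}, the fact that $I$ is an ideal, and $f(I) \subseteq J$) so that every invocation of an axiom of Definition \ref{def:dp} is applied at an element where its hypothesis is genuinely satisfied. With that accounting in place, each of the three verifications reduces to the short rewrite above.
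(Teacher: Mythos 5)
Your core calculations using axioms (iv) and (v) are correct, but the parenthetical claim that the $n=0$ case of the defining equation "pins down membership via \code{dpow_null}" does not hold. If $x \notin I$ but also $f(x) \notin J$, then \code{dpow_null} annihilates \emph{both} sides of $\delta_n(f(x)) = f(\gamma_n(x))$ for every $n$, so such an $x$ satisfies all the defining equations vacuously without lying in $I$; the $n = 0$ case only excludes the mixed situation $x \notin I$, $f(x) \in J$ (where it would force $1 = 0$). Worse, if the set is read literally over all $x \in A$ the proposition is false: take $A = \Z$ with $I = (0)$ and the trivial divided powers, $B = \Z/4$ with $J = (2)$ and the $x^n/n!$ divided powers (valid since $J^2 = 0$), and $f$ the reduction map. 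Every odd integer belongs to the set vacuously, but $1 + 1 = 2$ does not, since $\bar{2} \in J$ while $2 \notin I$ makes the $n = 0$ equation read $1 = 0$. So the set $\{0\} \cup (2\Z + 1)$ is not an ideal.

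What makes the proposition correct, and what the Mathlib definition actually does, is to build the constraint $x \in I$ into the carrier: the ideal is $\{x \in I : \forall n,\ \delta_n(f(x)) = f(\gamma_n(x))\}$, not $\{x \in A : \ldots\}$. Once $x \in I$ is part of the membership condition rather than something to be deduced, the bookkeeping you flag at the end becomes routine --- $x + y$ and $ax$ stay in $I$ because $I$ is an ideal, and $f(I) \subseteq J$ keeps every application of axioms (iv) and (v) inside its domain. With that adjustment, the chain of equalities you wrote is exactly the proof.
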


\begin{proposition}[{\cite[Proposition 3]{Roby65}}]\label{prop:roby65-prop3}
Let $S$ be a generating set for the ideal $I$. 
If $f : A \to B$ is a ring homomorphism such that $f(I) \subseteq J$, and for every $n \in \N$ and $x \in S$ the equality $\delta_n (f(x)) = f (\gamma_n(x))$ holds, then $f$ is a divided power morphism\href{https://github.com/leanprover-community/mathlib4/blob/951bd16d55870f47b2e2cd98c794f933e29bd2b5/Mathlib/RingTheory/DividedPowers/DPMorphism.lean#L159}{\extlink}.
\end{proposition}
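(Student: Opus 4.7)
The plan is to derive this proposition as a direct consequence of Proposition~\ref{prop:roby65-prop2}. Let $K \subseteq A$ denote the set of elements $x \in A$ such that $\delta_n(f(x)) = f(\gamma_n(x))$ for all $n \in \N$. By Proposition~\ref{prop:roby65-prop2}, $K$ is an ideal of $A$. By hypothesis, $S \subseteq K$, so the ideal generated by $S$ in $A$ is contained in $K$. Since $S$ generates $I$, we have $I \subseteq K$. Combined with the hypothesis $f(I) \subseteq J$, this is exactly what is needed for $f$ to be a divided power morphism.

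Concretely, I would begin by unfolding \lstinline{IsDPMorphism} and providing \lstinline{ideal_comp} directly from the hypothesis $f(I) \subseteq J$ (i.e., \lstinline{I.map f ≤ J}). For the \lstinline{dpow_comp} field, I would introduce $n$ and $a \in I$, apply Proposition~\ref{prop:roby65-prop2} to obtain the ideal $K$ along with a proof that $S \subseteq K$, and then conclude via a \lstinline{Ideal.span_le}-style lemma that $I = \operatorname{span}(S) \subseteq K$, which yields $a \in K$ and hence the desired equality.

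The only subtlety — and the step most likely to require care in the formal proof — is the precise way in which the hypothesis ``$S$ generates $I$'' is stated. In Mathlib this is typically \lstinline{Ideal.span S = I} (or an inequality \lstinline{I ≤ Ideal.span S} together with $S \subseteq I$), and one must rewrite accordingly to match the conclusion of Proposition~\ref{prop:roby65-prop2}, which produces an ideal of $A$ rather than a subset of $I$. After aligning these formulations, the containment $I \subseteq K$ follows from the universal property of the span, and no further combinatorics or computation with the divided power axioms is needed: all of the work has already been absorbed into the previous proposition.
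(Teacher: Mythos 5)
Your proposal is correct and matches the intended argument: Proposition~\ref{prop:roby65-prop2} is stated immediately before precisely so that one can apply it to the ideal $K$ of elements satisfying the compatibility condition, observe $S \subseteq K$, and conclude $I = \operatorname{span}(S) \subseteq K$. This is the same route the formalization takes, and your remark about aligning the spelling of ``$S$ generates $I$'' with the span is the only point that needs care in Lean.
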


\begin{corollary}\label{cor:roby65-prop3}
Let $S$ be a generating set for the ideal $I$, and let $\gamma'$ be a second divided power structure on $I$. If $\gamma'_n (x) = \gamma_n(x)$ for all $n \in \N$ and $x \in S$, then $\gamma' = \gamma$.
\end{corollary}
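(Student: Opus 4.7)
The plan is to derive the corollary directly from Proposition \ref{prop:roby65-prop3} by applying it to the identity ring homomorphism $\mathrm{id}_A : A \to A$, regarded as a ring map between the two divided power rings $(A, I, \gamma)$ and $(A, I, \gamma')$. The same ideal $I$ plays the role of both source and target ideal, while $\gamma$ serves as the source divided power structure and $\gamma'$ as the target one.

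First I would check the hypotheses of Proposition \ref{prop:roby65-prop3} for this choice. The containment $\mathrm{id}_A(I) \subseteq I$ is trivial. The assumption that $\gamma'_n(x) = \gamma_n(x)$ for all $n \in \N$ and all $x \in S$ is exactly the equality $\delta_n(f(x)) = f(\gamma_n(x))$ required on the generating set $S$, with $f = \mathrm{id}_A$ and $\delta = \gamma'$. Proposition \ref{prop:roby65-prop3} then yields that $\mathrm{id}_A$ is a divided power morphism $(A, I, \gamma) \to (A, I, \gamma')$, which unfolds to the equality $\gamma'_n(x) = \gamma_n(x)$ for every $n \in \N$ and every $x \in I$.

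Finally, in the Lean formalization both $\gamma$ and $\gamma'$ are extended to functions $\N \to A \to A$ that vanish outside $I$ via the \lstinline{dpow_null} axiom, so they already coincide on $A \setminus I$. Combining this with the previous paragraph gives $\gamma'_n(x) = \gamma_n(x)$ for all $n \in \N$ and all $x \in A$, i.e.\ $\gamma' = \gamma$. There is no genuine obstacle here: the entire content of the corollary is packaged in Proposition \ref{prop:roby65-prop3}, and the only step is to recognize the identity map as the right instance to plug in.
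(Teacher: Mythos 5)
Your proof is correct and follows exactly the same route as the paper: apply Proposition~\ref{prop:roby65-prop3} with $f = \mathrm{id}_A$ and $\delta = \gamma'$. The extra remark about \lstinline{dpow_null} ensuring agreement outside $I$ is a sensible observation about the Lean-level extensionality, but mathematically the corollary is already fully captured by the proposition.
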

\begin{proof}
Apply Proposition \ref{prop:roby65-prop3} with $f$ being the identity map on $A$ and $\delta := \gamma'$\href{https://github.com/leanprover-community/mathlib4/blob/951bd16d55870f47b2e2cd98c794f933e29bd2b5/Mathlib/RingTheory/DividedPowers/DPMorphism.lean#L239}{\extlink}.
\end{proof}

\subsection{Sub-Divided Power Ideals}
\label{subsec:sub-dpi}
In this section we describe the theory of sub-dp-ideals, following \cite[\S5]{Roby65} and 
\cite[\S3]{BO78}.

\begin{definition}
Let $(I, \gamma)$ be a divided power ideal of $A$. A subideal $J \le I$ is called a \emph{sub-divided power ideal}, or \emph{sub-dp-ideal} of $I$ if $\gamma_n(x) \in J$ for all $n > 0, x \in J$.
\end{definition}

As we did for divided power morphisms, we provide an unbundled and a bundled version of this definition, respectively called \lstinline{IsSubDPIdeal}\href{https://github.com/mariainesdff/divided_powers_journal/blob/78f8956f06816221781a16b5223753c97b9b641c/DividedPowers/SubDPIdeal.lean#L127}{\extlink} and \lstinline{SubDPIdeal} \href{https://github.com/mariainesdff/divided_powers_journal/blob/78f8956f06816221781a16b5223753c97b9b641c/DividedPowers/SubDPIdeal.lean#L275}{\extlink}.
\begin{lstlisting}
structure IsSubDPIdeal (J : Ideal A) : Prop where
  isSubideal : J ≤ I
  dpow_mem : ∀ {n : ℕ} (_: n ≠ 0) {j : A} (_ : j ∈ J), hI.dpow n j ∈ J
  
structure SubDPIdeal where
  carrier : Ideal A
  isSubideal : carrier ≤ I
  dpow_mem : ∀ {n : ℕ} (_ : n ≠ 0), ∀ j ∈ carrier, hI.dpow n j ∈ carrier
\end{lstlisting}

If $J$ is a sub-dp-ideal of $(I, \gamma)$, the family of maps obtained by restricting $\gamma$ to $J$ is a divided power structure on $J$\href{https://github.com/mariainesdff/divided_powers_journal/blob/78f8956f06816221781a16b5223753c97b9b641c/DividedPowers/SubDPIdeal.lean#L144}{\extlink}.
\begin{lstlisting}
def IsSubDPIdeal.dividedPowers {J : Ideal A} (hJ : IsSubDPIdeal hI J) : 
    DividedPowers J where
  dpow n x := if x ∈ J then hI.dpow n x else 0
  ...
\end{lstlisting}

The next three lemmas can be used to determine whether certain ideals in $A$ are sub-dp-ideals of $I$.
If $J$ is any ideal in $A$, it is not always the case that the intersection $J \cap I$ is a sub-dp-ideal; indeed, this requires the divided power structure on $I$ to satisfy the following compatibility condition modulo $J$.

\begin{lemma}
Given an ideal $J$ of $A$, the ideal $J \cap I$ is a sub-dp-ideal of $(I, \gamma)$ if and only if for every $n \in \N$ and every $a, b \in I$ such that $a - b \in J$, the element $\gamma_n(a) - \gamma_n(b)$ belongs to~$J$\href{https://github.com/mariainesdff/divided_powers_journal/blob/78f8956f06816221781a16b5223753c97b9b641c/DividedPowers/SubDPIdeal.lean#L178}{\extlink}.
\end{lemma}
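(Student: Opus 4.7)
The plan is to handle the two directions separately, using the divided-power addition axiom (iv) for the forward direction and a specialization of the hypothesis at $b=0$ for the backward one.

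For the forward implication, suppose $J\cap I$ is a sub-dp-ideal of $(I,\gamma)$, and take $a,b\in I$ with $a-b\in J$. Since $I$ is an ideal, $a-b\in I$, hence $a-b\in J\cap I$. Writing $a=(a-b)+b$ and applying axiom~(iv),
\[
\gamma_n(a)=\sum_{i+j=n}\gamma_i(a-b)\,\gamma_j(b).
\]
The summand with $i=0$ equals $\gamma_0(a-b)\,\gamma_n(b)=\gamma_n(b)$ by axiom~(i), so
\[
\gamma_n(a)-\gamma_n(b)=\sum_{\substack{i+j=n\\ i\ge 1}}\gamma_i(a-b)\,\gamma_j(b).
\]
For $i\ge 1$ the sub-dp-ideal hypothesis gives $\gamma_i(a-b)\in J\cap I\subseteq J$, and since $J$ is an ideal each term of the right-hand sum lies in $J$; hence so does the sum.

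For the converse, assume the congruence condition holds. The inclusion $J\cap I\le I$ is immediate, so the only nontrivial axiom to check is that $\gamma_n(x)\in J\cap I$ for every $n\ge 1$ and every $x\in J\cap I$. Membership in $I$ follows from axiom~(iii). For membership in $J$, apply the hypothesis with $a=x$ and $b=0$ (both in $I$, and $a-b=x\in J$) to get $\gamma_n(x)-\gamma_n(0)\in J$. A short computation using axiom~(v) with $a=0$ gives $\gamma_n(0)=0^n\gamma_n(0)=0$ for $n\ge 1$, so $\gamma_n(x)\in J$, as required.

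I do not expect any serious obstacle; the only mild subtlety is that the stated lemma uses subtraction, so in the Lean formalization one must either restrict to the \lstinline{CommRing} case or rephrase the condition additively (e.g.\ ``$a\equiv b\pmod{J}$'') so as to remain in the \lstinline{CommSemiring} setting used for \lstinline{DividedPowers}. A second minor care point is to split off the $i=0$ term of the antidiagonal cleanly in Lean, which is routine with the \lstinline{Finset.sum} API.
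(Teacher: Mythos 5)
Your proof is correct and takes the standard route that the formalization also follows: both directions hinge on expanding $\gamma_n\bigl((a-b)+b\bigr)$ via axiom~(iv) and peeling off the $i=0$ term, with the converse reducing to $\gamma_n(0)=0$ by specializing at $b=0$. Your remark about subtraction forcing a \lstinline{CommRing} hypothesis (or an additive rephrasing) is exactly the kind of care the Lean statement needs, and is well spotted.
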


\begin{lemma}
Let $S \subseteq I$ be a subset and let $J$ be the ideal generated by $S$. Then $J$ is a sub-dp-ideal of $I$ if and only if for all $n > 0$ and all $s \in S$, $\gamma_n(s)$ belongs to $J$. 
\end{lemma}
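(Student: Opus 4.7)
The forward direction is immediate: since $S \subseteq J$ and $J$ is assumed to be a sub-dp-ideal, the defining property directly gives $\gamma_n(s) \in J$ for every $n > 0$ and every $s \in S$. For the converse, the plan is to introduce the auxiliary set
\[ T = \{\, x \in I : \gamma_n(x) \in J \text{ for all } n > 0 \,\}, \]
prove that $T$ is an ideal of $A$, and then conclude by noting that the hypothesis $S \subseteq T$ forces $J = \langle S \rangle \subseteq T$, which is exactly the statement that $J$ is a sub-dp-ideal.

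To show that $T$ is an ideal, I would verify the three standard closure properties using the divided power axioms. The zero element lies in $T$ because axiom (v) with $a = 0$ gives $\gamma_n(0) = 0^n \gamma_n(0) = 0 \in J$ for all $n > 0$. Closure under addition uses axiom (iv): for $x, y \in T$, expanding $\gamma_n(x+y) = \sum_{i+j=n} \gamma_i(x) \gamma_j(y)$, every pair $(i, j)$ with $i + j = n > 0$ satisfies $i > 0$ or $j > 0$, so at least one factor lies in $J$, and since $J$ is an ideal each term (hence the full sum) belongs to $J$. Closure under multiplication by $A$ follows from axiom (v): if $x \in T$ and $a \in A$, then $\gamma_n(a x) = a^n \gamma_n(x) \in J$.

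With $T$ known to be an ideal of $A$ contained in $I$ and $S \subseteq T$ given by hypothesis, the characterization of $J$ as the smallest ideal containing $S$ yields $J \subseteq T$, completing the proof. The only mildly delicate step is the case analysis in the closure-under-addition argument, where each summand of the antidiagonal sum must be shown to lie in $J$; formalizing this will likely require a small auxiliary observation of the form ``if $i + j = n$ and $n \ne 0$, then $i \ne 0$ or $j \ne 0$'' with the two branches dispatched via the ideal-absorption property of $J$. Structurally this mirrors the technique already used for Proposition~\ref{prop:roby65-prop2}, so the Lean proof should be able to reuse much of the same supporting infrastructure.
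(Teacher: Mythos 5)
Your proposal is correct and is essentially the same argument the paper gives: defining $T$ as an ideal containing $S$ and invoking the universal property of the span is the standard dual formulation of the ``induction on the generating set'' that the paper references (in Lean this is exactly what \code{Ideal.span_induction} gives, with the zero/addition/scalar-multiplication cases you identify). Your observation that the technique mirrors Proposition~\ref{prop:roby65-prop2} is also accurate.
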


\begin{proof}
Necessity is obvious. The converse direction is shown by an induction argument on the generating set. See \cite[Lemma 3.6]{BO78} for a detailed proof and \lstinline{span_isSubDPIdeal_iff}\href{https://github.com/mariainesdff/divided_powers_journal/blob/78f8956f06816221781a16b5223753c97b9b641c/DividedPowers/SubDPIdeal.lean#L195}{\extlink} for our formalization.
\end{proof}

\begin{lemma}[{\cite[Proposition 1.6.1.(i)]{Ber74}}]
For every ideal $J$ of $A$, the ideal $I \cdot J$ is a sub-dp-ideal of $(I, \gamma)$\href{https://github.com/mariainesdff/divided_powers_journal/blob/78f8956f06816221781a16b5223753c97b9b641c/DividedPowers/SubDPIdeal.lean#L312}{\extlink}.
\end{lemma}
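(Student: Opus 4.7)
The plan is to reduce the sub-dp-ideal condition to a check on a generating set via the preceding lemma (\lstinline{span_isSubDPIdeal_iff}), and then to evaluate $\gamma_n$ on such generators using axiom (v).

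First, recall that $I \cdot J$ is by definition the ideal generated by the set $S = \{x \cdot y : x \in I, y \in J\}$. Since $I$ is an ideal, every element of $S$ lies in $I$, so $S \subseteq I$ and the hypotheses of the preceding lemma are met. It therefore suffices to prove that $\gamma_n(s) \in I \cdot J$ for every $s \in S$ and every $n > 0$.

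Given such an $s = x \cdot y$ with $x \in I$ and $y \in J$, axiom (v) applied with $a = y$ gives
\[
\gamma_n(x \cdot y) = \gamma_n(y \cdot x) = y^n \cdot \gamma_n(x).
\]
Since $n \ge 1$, we have $y^n = y \cdot y^{n-1} \in J$ because $J$ is an ideal containing $y$, and $\gamma_n(x) \in I$ by axiom (iii). The product $y^n \cdot \gamma_n(x)$ therefore lies in $I \cdot J$, completing the verification.

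There is no real mathematical obstacle; the only care needed in the formalization is to bridge Mathlib's definition of the product ideal \lstinline{I * J} with its characterization as the span of $\{x \cdot y : x \in I, y \in J\}$, so that \lstinline{span_isSubDPIdeal_iff} applies directly. One must also be mindful that axiom (v) in our \lstinline{DividedPowers} structure is stated as \lstinline{dpow n (a * x)}, so in the application the element of $J$ plays the role of the scalar $a$ and the element of $I$ plays the role of $x$.
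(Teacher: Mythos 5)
Your proof is correct and follows what is essentially the standard (and the formalization's) route: reduce to the generating set $\{xy : x\in I,\ y\in J\}$ of $I\cdot J$ via \lstinline{span_isSubDPIdeal_iff}, then use axiom (v) to write $\gamma_n(xy)=y^n\gamma_n(x)$ and conclude from $y^n\in J$ (as $n\ge 1$) and $\gamma_n(x)\in I$ (axiom (iii)) that this lies in $I\cdot J$. The only bookkeeping, as you note, is aligning Mathlib's definition of the product ideal with the span of pairwise products.
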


Next, we show that sub-dp-ideals of $(I, \gamma)$ form a complete lattice with respect to inclusion, which we register as an instance\href{https://github.com/mariainesdff/divided_powers_journal/blob/78f8956f06816221781a16b5223753c97b9b641c/DividedPowers/SubDPIdeal.lean#L404}{\extlink}. Note that this is only possible by using the bundled version of the definition of sub-dp-ideal.

\begin{lstlisting}
instance : CompleteLattice (SubDPIdeal hI) := by ...
\end{lstlisting}

In order to provide this instance, we first show that sub-ideals of $I$ form a complete lattice\href{https://github.com/mariainesdff/divided_powers_journal/blob/78f8956f06816221781a16b5223753c97b9b641c/DividedPowers/SubDPIdeal.lean#L95}{\extlink}; combining this with the fact that the forgetful function sending a sub-dp-ideal $J$ of $I$ to $J$ regarded as a sub-ideal is injective, we can then apply Mathlib's \lstinline{Function.Injective.completeLattice} to conclude our desired result.

The lattice of sub-ideals of $I$ is clearly bounded, with top element $I$ and bot element $(0)$. Recall that the supremum of two sub-ideals $J$ and $K$ of $I$ is equal to their sum $J + K$, and their infimum is their intersection $J \cap K$. After proving that sub-ideals form a complete lattice, to be able to carry out the outlined strategy we need to show that, if $J$ and $K$ are sub-dp-ideals of $(I, \gamma)$, then so are  these constructions (as well as the analogous claims for arbitrary families of sub-ideals). For instance, we prove in \lstinline{isSubDPIdeal_sup}\href{https://github.com/mariainesdff/divided_powers_journal/blob/78f8956f06816221781a16b5223753c97b9b641c/DividedPowers/SubDPIdeal.lean#L226}{\extlink} that the sum of two sub-dp-ideals is again a sub-dp-ideal, and in \lstinline{isSubDPIdeal_iSup}\href{https://github.com/mariainesdff/divided_powers_journal/blob/78f8956f06816221781a16b5223753c97b9b641c/DividedPowers/SubDPIdeal.lean#L235}{\extlink} the analogous statement for an indexed collection of sub-dp-ideals.

Note that there is a subtlety in defining the class instance \lstinline{InfSet (SubDPIdeal hI)}\href{https://github.com/mariainesdff/divided_powers_journal/blob/78f8956f06816221781a16b5223753c97b9b641c/DividedPowers/SubDPIdeal.lean#L370}{\extlink}
for the infimum of a set of sub-dp-ideals of $(I, \gamma)$: instead of simply defining the underlying set as the intersection $\bigcap J \in S$, we need to define it as $I \cap (\bigcap_{J\in S} J)$, so that when the set~$S$ is empty, this yields the desired result that the infimum is $I$ (instead of the whole ring $A$).
We also study the relation between divided power morphisms and sub-dp-ideals.

\begin{lemma}
Let $f : (A, I, \gamma) \to (B,J,\delta)$ be a divided power morphism. Then \textnormal{span}$(f(I))$ is a sub-dp-ideal of $J$\href{https://github.com/mariainesdff/divided_powers_journal/blob/78f8956f06816221781a16b5223753c97b9b641c/DividedPowers/SubDPIdeal.lean#L265}{\extlink} and $\ker f \cap I$ is a sub-dp-ideal of $I$\href{https://github.com/mariainesdff/divided_powers_journal/blob/78f8956f06816221781a16b5223753c97b9b641c/DividedPowers/SubDPIdeal.lean#L517}{\extlink}.
\end{lemma}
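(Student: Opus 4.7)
The plan is to handle the two claims independently, using the reduction lemmas already established in this section.

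For the first claim, that $\mathrm{span}(f(I))$ is a sub-dp-ideal of $J$, I would first check the inclusion: since $f(I) \subseteq J$ by the definition of a divided power morphism, and $J$ is an ideal, the ideal $\mathrm{span}(f(I))$ generated by $f(I)$ is contained in $J$. For the stability under $\delta_n$, I would invoke the lemma \lstinline{span_isSubDPIdeal_iff} (the second lemma of \S\ref{subsec:sub-dpi}), applied to the generating set $S = f(I)$ inside $J$. This reduces the verification to showing that $\delta_n(f(x)) \in \mathrm{span}(f(I))$ for every $n > 0$ and every $x \in I$. But by the divided power morphism compatibility, $\delta_n(f(x)) = f(\gamma_n(x))$, and since $\gamma_n(x) \in I$ by axiom (iii) of Definition~\ref{def:dp}, this element lies in $f(I) \subseteq \mathrm{span}(f(I))$.

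For the second claim, that $\ker f \cap I$ is a sub-dp-ideal of $I$, I would verify the two fields of \lstinline{IsSubDPIdeal} directly. The inclusion $\ker f \cap I \subseteq I$ is immediate. For the closure condition, fix $n \neq 0$ and $x \in \ker f \cap I$. Axiom (iii) gives $\gamma_n(x) \in I$, so it only remains to show $\gamma_n(x) \in \ker f$. Applying $f$ and using the morphism compatibility, $f(\gamma_n(x)) = \delta_n(f(x)) = \delta_n(0)$, which equals $0$ for $n \geq 1$ (a standard consequence of the axioms, e.g.\ from $\delta_n(0) = \delta_n(0 \cdot 0) = 0^n \cdot \delta_n(0) = 0$, and available in the library as the unfolding of \lstinline{dpow_eval_zero} or directly from the $a = 0$ case of \lstinline{dpow_mul}). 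Hence $\gamma_n(x) \in \ker f \cap I$, as required.

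No serious obstacle is expected: both parts reduce to a one-line application of the definition of a divided power morphism, together with the reduction lemmas of \S\ref{subsec:sub-dpi}. The only mildly delicate point is making sure the correct library lemma is used to conclude $\delta_n(0) = 0$ for $n > 0$; alternatively, one can avoid this by invoking the first lemma of \S\ref{subsec:sub-dpi} with $J = \ker f$ and checking that $\gamma_n(a) - \gamma_n(b) \in \ker f$ whenever $a, b \in I$ and $a - b \in \ker f$, which again follows immediately from $f(\gamma_n(a)) = \delta_n(f(a)) = \delta_n(f(b)) = f(\gamma_n(b))$.
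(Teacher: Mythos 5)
Your proposal is correct and matches the natural proof strategy suggested by the paper's own machinery: for $\mathrm{span}(f(I))$ you reduce via \lstinline{span_isSubDPIdeal_iff} to checking $\delta_n(f(x)) = f(\gamma_n(x)) \in f(I)$ for $x\in I$, and for $\ker f \cap I$ you verify the two fields of \lstinline{IsSubDPIdeal} directly, using $f(\gamma_n(x)) = \delta_n(f(x)) = \delta_n(0) = 0$ for $n\ge 1$. The observation that $\delta_n(0)=0$ for $n>0$ is a standard library fact (derivable from axiom~(v) with $a=0$), and both this direct route and your alternative via the compatibility-modulo-$J$ criterion are sound.
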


Finally, we wish to highlight a construction that will be useful for the formalization of the universal divided power algebra (see \S \ref{subsec:future}). Given a subset $S \subseteq I$, we define \code{SubDPIdeal.span}\href{https://github.com/mariainesdff/divided_powers_journal/blob/78f8956f06816221781a16b5223753c97b9b641c/DividedPowers/SubDPIdeal.lean#L446}{\extlink} as the smallest sub-dp-ideal of $I$ that contains $S$. We prove that this ideal is spanned by the family $\{\gamma_n(s)\}$ for $s \in S$ and $n > 0$\href{https://github.com/mariainesdff/divided_powers_journal/blob/78f8956f06816221781a16b5223753c97b9b641c/DividedPowers/SubDPIdeal.lean#L482}{\extlink}.
\begin{lstlisting}
def SubDPIdeal.span (S : Set A) : SubDPIdeal hI := 
  sInf {J : SubDPIdeal hI | S ⊆ J.carrier}
\end{lstlisting}

\subsection{Divided Powers on Quotients}
\label{subsec:quot}
In this section we discuss the existence and uniqueness of divided power structures on quotient rings, following \cite[Lemma 3.5]{BO78}.
Let $(A, I, \gamma)$ be a divided power ring and let $J$ be any ideal of $A$. We denote by $I\cdot A/J$ the ideal generated by the image of $I$ in the quotient ring $A/J$.

If there is a divided power structure on $I\cdot A/J$ such that the quotient map $A \to A/J$ is a divided power morphism, then $J \cap I$ is a sub-dp-ideal of $I$\href{https://github.com/mariainesdff/divided_powers_journal/blob/78f8956f06816221781a16b5223753c97b9b641c/DividedPowers/SubDPIdeal.lean#L589}{\extlink} (see Listing \ref{code:inf}).

\begin{lstlisting}[caption={$J \cap I$ as a sub-dp-ideal of $I$.},float=h, label=code:inf,captionpos=t,abovecaptionskip=-\medskipamount]
def DividedPowers.subDPIdeal_inf_of_quot {J : Ideal A}
    {hJ : DividedPowers (I.map (Ideal.Quotient.mk J))} {φ : DPMorphism hI hJ}
    (hφ : φ.toRingHom = Ideal.Quotient.mk J) :
    SubDPIdeal hI where
  carrier    := J ⊓ I
  ...
\end{lstlisting}

Conversely, if $J \cap I$ is a sub-dp-ideal of $I$, then there is a unique induced divided power structure on $I\cdot A/J$ such that the quotient map is a divided power morphism\href{https://github.com/mariainesdff/divided_powers_journal/blob/78f8956f06816221781a16b5223753c97b9b641c/DividedPowers/SubDPIdeal.lean#L733}{\extlink}. Note that we actually formalize this statement in the more general setting\footnote[3]{We mean `more general setting' in the type-theoretic sense. Mathematically, both are equivalent.} of a surjective ring homomorphism $f : A \to B$ such that $\ker f \cap I$ is a sub-dp-ideal of $I$, and then specialize the result to the quotient map $A \to A/J$.  Given such an $f : A \to B$, the divided power structure $\bar{\gamma}$ on $I\cdot B$ is defined by sending $n \in \N, f(a) \in f(I)$ to $f(\gamma_n(a))$\href{https://github.com/mariainesdff/divided_powers_journal/blob/78f8956f06816221781a16b5223753c97b9b641c/DividedPowers/SubDPIdeal.lean#L648}{\extlink}; note that by surjectivity of $B$, this is a function $\N \to \text{span}(f(I)) \to B$. We check that $\bar{\gamma}$ is the unique divided power structure on $\text{span}(f(I))$ such that the map $f$ is a divided power morphism from $I$ to $\text{span}(f(I))$\href{https://github.com/mariainesdff/divided_powers_journal/blob/78f8956f06816221781a16b5223753c97b9b641c/DividedPowers/SubDPIdeal.lean#L694}{\extlink}.

\begin{lstlisting}
def Quotient.OfSurjective.dividedPowers {f : A →+* B} 
    (hf : Function.Surjective f) (hIf : IsSubDPIdeal hI (ker f ⊓ I)) :
    DividedPowers (I.map f) where
  dpow n := Function.extend 
    -- If `b' is of the form `f a' for `a' in `I',
    (fun a ↦ f a : I → B)
    -- then `dpow n b' is given by `f (hI.dpow n a)';
    (fun a ↦ f (hI.dpow n a) : I → B) 
    -- otherwise, it is zero.
    0
  ...
  
def Quotient.dividedPowers (hIJ : hI.IsSubDPIdeal (J ⊓ I)) :
    DividedPowers (I.map (Ideal.Quotient.mk J)) :=     
  DividedPowers.Quotient.OfSurjective.dividedPowers hI   
    Ideal.Quotient.mk_surjective (IsSubDPIdeal_aux hI hIJ)

\end{lstlisting}

\section{The Exponential Module}
\label{sec:exp}

\subsection{The Topology on the Multivariate Power Series Ring}
\label{subsec:topology}

Let $A$ be a commutative ring and let $\sigma$ be a set;
we consider the ring $A\lbra \sigma\rbra $
of multivariate power series with coefficients
in~$A$ and indeterminates indexed by $\sigma$, and its subring $A[\sigma]$
of polynomials.
For clarity, one often writes $T_s$
for the indeterminate corresponding to $s\in \sigma$.
When $\sigma=\{X,Y\}$, say, a more classical notation would be $A\lbra X,Y\rbra$
for power series in~$X,Y$,
and $A[X,Y]$ for polynomials.

Monomials, i.e., finite products of indeterminates, can be 
realized as functions with finite support from~$\sigma$ to~$\N$,
and a power series in $A\lbra \sigma\rbra$ is nothing but a function from~$M$
to~$A$, where $M$ is the set of monomials.
For $m\in M$, one writes~$T^m$ for the finite
product $\prod_{s\in \sigma} T_s^{m_s}$.
We write $\coeff_m(f)$ for the coefficient of a monomial~$m$ in~$f$.
Polynomials correspond to those power series for which
all but finitely many coefficients are zero;
then $f$ is the finite sum $\sum_{m\in M} \coeff_m(f) T^m$.
When $f$ is a power series, this sum is infinite, and
hence it has no meaning a priori.

The Mathlib library provides the definition of multivariate\href{https://github.com/leanprover-community/mathlib4/blob/951bd16d55870f47b2e2cd98c794f933e29bd2b5/Mathlib/RingTheory/MvPowerSeries/Basic.lean#L83}{\extlink} and univariate\href{https://github.com/leanprover-community/mathlib4/blob/951bd16d55870f47b2e2cd98c794f933e29bd2b5/Mathlib/RingTheory/PowerSeries/Basic.lean#L55}{\extlink} power series rings, and develops some of their algebraic theory. However, before our work, the topological theory of power series rings was missing from the library.

If $A$ has a topology, the ring $A\lbra \sigma\rbra$ is naturally
induced with the product topology. In particular,
a sequence $(f_i)$
of power series converges to a power series~$f$ if and only if
for each monomial~$m$, the sequence $\coeff_m(f_i)$
converges to $\coeff_m(f)$\href{https://github.com/leanprover-community/mathlib4/blob/951bd16d55870f47b2e2cd98c794f933e29bd2b5/Mathlib/RingTheory/MvPowerSeries/PiTopology.lean#L90}{\extlink}.

The product topology on the multivariate power series ring is not declared as a global instance because there are other topologies that one might want to consider on this ring; for example, if $A$ is a normed ring, the supremum of the norm of the coefficients induces a topology on $A\lbra\sigma\rbra$ that is finer than the product topology. Consequently, we declare this product topology as a scoped instance that can be loaded by writing the instruction \code{open scoped MvPowerSeries.WithPiTopology}.

\begin{lstlisting}
scoped instance (σ A : Type*) [TopologicalSpace A] : 
    TopologicalSpace (MvPowerSeries σ A) :=
 Pi.topologicalSpace -- The product topology.
\end{lstlisting}

Note that, while in mathematical practice one typically considers the case where $A$ is a ring, the above definition only assumes that $A$ is a topological space. If $A$ is a topological ring,
meaning that the topology of~$A$ is such that the addition and multiplication
are continuous, then the product topology makes $A\lbra \sigma\rbra$ 
a topological ring as well\href{https://github.com/leanprover-community/mathlib4/blob/951bd16d55870f47b2e2cd98c794f933e29bd2b5/Mathlib/RingTheory/MvPowerSeries/PiTopology.lean#L111}{\extlink}. The analogous statement holds when $A$ is a topological semiring\href{https://github.com/leanprover-community/mathlib4/blob/951bd16d55870f47b2e2cd98c794f933e29bd2b5/Mathlib/RingTheory/MvPowerSeries/PiTopology.lean#L101}{\extlink}.

If $A$ is a Hausdorff topological space, then $A\lbra \sigma\rbra$ is Hausdorff\href{https://github.com/leanprover-community/mathlib4/blob/951bd16d55870f47b2e2cd98c794f933e29bd2b5/Mathlib/RingTheory/MvPowerSeries/PiTopology.lean#L75}{\extlink}. Hence, in this setting, we obtain for every power series~$f$
the equality $f=\sum_{m\in M} \coeff_m(f) T^m$, where the right hand side is well-defined
as a summable family\href{https://github.com/leanprover-community/mathlib4/blob/951bd16d55870f47b2e2cd98c794f933e29bd2b5/Mathlib/RingTheory/MvPowerSeries/PiTopology.lean#L193}{\extlink}.

If the ring $A$ is endowed with the discrete topology, then for $f \in A\lbra \sigma\rbra$ the limit
$\lim_{n \to \infty} f^n$ is equal to zero if and only if the constant coefficient of $f$ is nilpotent\href{https://github.com/leanprover-community/mathlib4/blob/951bd16d55870f47b2e2cd98c794f933e29bd2b5/Mathlib/RingTheory/MvPowerSeries/PiTopology.lean#L160}{\extlink}.

If $A$ has a uniform structure, then $A\lbra \sigma\rbra$
is naturally endowed with the product uniformity\href{https://github.com/leanprover-community/mathlib4/blob/951bd16d55870f47b2e2cd98c794f933e29bd2b5/Mathlib/RingTheory/MvPowerSeries/PiTopology.lean#L204}{\extlink}. If $A$ is a complete uniform space, then so is $A\lbra \sigma\rbra$\href{https://github.com/leanprover-community/mathlib4/blob/951bd16d55870f47b2e2cd98c794f933e29bd2b5/Mathlib/RingTheory/MvPowerSeries/PiTopology.lean#L215}{\extlink}.
Then $A[\sigma]$ is dense in~$A\lbra \sigma\rbra$\href{https://github.com/leanprover-community/mathlib4/blob/951bd16d55870f47b2e2cd98c794f933e29bd2b5/Mathlib/RingTheory/MvPowerSeries/Evaluation.lean#L109}{\extlink}.

\subsection{Linear Topologies on Multivariate Power Series Rings}
\label{subsec:linear-top}

Some ring topologies which are relevant in algebra have the particular property of being \emph{linear}, meaning that zero has a basis of open neighborhoods consisting of two-sided ideals. 

More generally, a topology on a left $A$-module $M$ is said to be \emph{linear} if the open left $A$-submodules of $M$ form a basis of neighborhoods of zero. We formalize this as a typeclass \code{IsLinearTopology}\href{https://github.com/leanprover-community/mathlib4/blob/951bd16d55870f47b2e2cd98c794f933e29bd2b5/Mathlib/Topology/Algebra/LinearTopology.lean#L98}{\extlink}.

\begin{lstlisting}[mathescape]
variable (A M : Type*) [Ring A] [AddCommGroup A] [Module A M] [TopologicalSpace M]

class IsLinearTopology where
  hasBasis_submodule' : ($\mathcal{N}$ (0 : M)).HasBasis
    (fun N : Submodule A M ↦ (N : Set M) ∈ $\mathcal{N}$ 0)
    (fun N : Submodule A M ↦ (N : Set M))
\end{lstlisting}

We show that the topology on the ring $A$ is linear (i.e., there is a basis of open neighborhoods of zero consisting of two-sided ideals) if and only if it is linear both when $A$ is regarded as an $A$-module and as an $A^\text{mop}$-module, where $A^\text{mop}$ denotes the multiplicative opposite of $A$\href{https://github.com/leanprover-community/mathlib4/blob/951bd16d55870f47b2e2cd98c794f933e29bd2b5/Mathlib/Topology/Algebra/LinearTopology.lean#L315}{\extlink}.

\begin{lstlisting}[mathescape]
theorem isLinearTopology_iff_hasBasis_twoSidedIdeal :
  IsLinearTopology A A ∧ IsLinearTopology A$^\text{mop}$ A ↔
    ($\mathcal{N}$ 0).HasBasis (fun I : TwoSidedIdeal A ↦ (I : Set A) ∈ $\mathcal{N}$ 0) 
       (fun I : TwoSidedIdeal A ↦ (I : Set A))
\end{lstlisting}

Considering this equivalence, instead of providing a special spelling for the property that the topology on the ring $A$ is linear, we write this in Lean as
\begin{lstlisting}[mathescape]
[IsLinearTopology A A] [IsLinearTopology A$^\text{mop}$ A]
\end{lstlisting}

If the topology on $A$ is linear, $I$ is an open ideal of~$A$ and $m$ is a monomial, then
the set~$I_m$ of power series $f\in A\lbra \sigma\rbra$ such 
that $\coeff_k(f)\in I$ for all monomials $k\leq m$
is an ideal of $A\lbra \sigma\rbra$. When $I$
runs over all open ideals of~$A$ and $m$ runs over all monomials,
this gives a basis of open neighborhoods of~$0$ in~$A\lbra \sigma\rbra$\href{https://github.com/leanprover-community/mathlib4/blob/951bd16d55870f47b2e2cd98c794f933e29bd2b5/Mathlib/RingTheory/MvPowerSeries/LinearTopology.lean#L48}{\extlink}. In other words, the topology on $A\lbra \sigma\rbra$ is linear as well\href{https://github.com/leanprover-community/mathlib4/blob/951bd16d55870f47b2e2cd98c794f933e29bd2b5/Mathlib/RingTheory/MvPowerSeries/LinearTopology.lean#L128}{\extlink}. In particular, this applies when $A$ is endowed with the discrete topology.

\subsection{Evaluation and Substitution of Multivariate Power Series}
\label{subsec:eval-subst}

Let $B$ be a topological commutative $A$-algebra  whose topology is linear.
For any $b\colon \sigma \to B$, one can evaluate polynomials $f\in A[\sigma]$
at $b$. This gives a ring morphism $\eval_b\colon A[\sigma]\to B$.
Contrary to what happens for polynomials, power series cannot always be evaluated at every point, at least if one wishes for good continuity properties. However, if $b_s \to 0$ in~$B$ along the filter of cofinite subsets,
and if for every $s\in \sigma$, the sequence $b_s^n$ converges to~$0$
(one says that $b_s$ is \emph{topologically nilpotent}),
then $\eval_b$ is uniformly continuous,
and extends uniquely to a continuous morphism from $A\lbra \sigma\rbra$ to~$B$.

We formalize the relevant conditions as a structure
\lstinline{MvPowerSeries.HasEval}\href{https://github.com/leanprover-community/mathlib4/blob/951bd16d55870f47b2e2cd98c794f933e29bd2b5/Mathlib/RingTheory/MvPowerSeries/Evaluation.lean#L63}{\extlink},

\begin{lstlisting} [mathescape, caption={Conditions for evaluation of power series being defined at $b : \sigma \to B$.},float=h, label=code:HasEval,captionpos=t,abovecaptionskip=-\medskipamount]
structure HasEval (b : σ → B) : Prop where
  hpow : ∀ s, IsTopologicallyNilpotent (b s)
  tendsto_zero : Tendsto b cofinite ($\mathcal{N}$ 0)
\end{lstlisting}

\removelastskip\noindent
in which we use our definition \code{IsTopologicallyNilpotent}\href{https://github.com/leanprover-community/mathlib4/blob/951bd16d55870f47b2e2cd98c794f933e29bd2b5/Mathlib/Topology/Algebra/TopologicallyNilpotent.lean#L42}{\extlink}:

\begin{lstlisting}[mathescape]
def IsTopologicallyNilpotent {A : Type*} [MonoidWithZero A] 
    [TopologicalSpace A] (a : A) : Prop :=
  Tendsto (a ^ ·) atTop ($\mathcal{N}$ 0) -- The powers of `a' converge to `0'.
\end{lstlisting}

We provide the definition \code{MvPowerSeries.aeval}\href{https://github.com/leanprover-community/mathlib4/blob/951bd16d55870f47b2e2cd98c794f933e29bd2b5/Mathlib/RingTheory/MvPowerSeries/Evaluation.lean#L296}{\extlink} to evaluate a multivariate power series $f \in A\lbra \sigma \rbra$ at a point $b : \sigma \to B$. This agrees with the evaluation of $f$ as a polynomial whenever $f \in A[\sigma]$. Otherwise, it is defined by density from polynomials; its values are irrelevant unless the algebra map $A \to B$ is continuous and $b$ satisfies the two conditions
bundled in \code{MvPowerSeries.HasEval b}. The function \code{MvPowerSeries.aeval} is an $A$-algebra map, whose underlying function and ring homomorphism are denoted \color{black}\code{MvPowerSeries.eval₂}\href{https://github.com/leanprover-community/mathlib4/blob/951bd16d55870f47b2e2cd98c794f933e29bd2b5/Mathlib/RingTheory/MvPowerSeries/Evaluation.lean#L193}{\extlink} and \code{MvPowerSeries.eval₂Hom}\href{https://github.com/leanprover-community/mathlib4/blob/951bd16d55870f47b2e2cd98c794f933e29bd2b5/Mathlib/RingTheory/MvPowerSeries/Evaluation.lean#L220}{\extlink}, respectively (see Listing \ref{code:eval}).

\begin{lstlisting}[caption={Evaluation of multivariate power series.},float=h, label=code:eval,captionpos=t,abovecaptionskip=-\medskipamount]
def eval₂ (f : MvPowerSeries σ A) (φ : A →+* B) (b : σ → B): B :=
  if H : ∃ p : MvPolynomial σ A, p = f then (MvPolynomial.eval₂ φ b H.choose)
  else IsDenseInducing.extend coeToMvPowerSeries_isDenseInducing (MvPolynomial.eval₂ φ b) f
 
-- eval₂ as a ring homomorphism
def eval₂Hom (hφ : Continuous φ) (hb : HasEval b) := ... 
  
--Specialize `eval₂' to the case where `φ' is `AlgebraMap A B'.
def aeval (hb : HasEval b) : MvPowerSeries σ A →ₐ[A] B where
  toRingHom := MvPowerSeries.eval₂Hom (continuous_algebraMap A B) hb
  ...
\end{lstlisting}
We specialize our work to provide a corresponding \code{PowerSeries.aeval} definition\href{https://github.com/leanprover-community/mathlib4/blob/951bd16d55870f47b2e2cd98c794f933e29bd2b5/Mathlib/RingTheory/PowerSeries/Evaluation.lean#L159}{\extlink} for evaluation of univariate power series.

Under certain conditions, it is possible to substitute multivariate power series within other power series. This is a special case of evaluation of power series, in which the ground ring $A$ is given the discrete topology. Indeed, let $\tau$ be a set and consider a map $b \colon \sigma \to B\lbra\tau\rbra$ that assigns to each indeterminate in $\sigma$ a power series in $B\lbra\tau\rbra$.
Provided that the constant coefficient of $b_s$ is nilpotent for each $s \in \sigma$, and that $b_s \to 0$ in~$B\lbra\tau\rbra$ along the filter of cofinite subsets, then we can substitute $b$ into a power series $f \in A\lbra\sigma\rbra$. We record these conditions in a structure \code{MvPowerSeries.HasSubst}\href{https://github.com/leanprover-community/mathlib4/blob/8a6f6f7dbe193f0f8a6c0a33e3430e70748f96cf/Mathlib/RingTheory/MvPowerSeries/Substitution.lean#L71}{\extlink}.
\begin{lstlisting}
structure HasSubst (a : σ → MvPowerSeries τ S) : Prop where
  const_coeff s : IsNilpotent (constantCoeff τ S (a s))
  coeff_zero d : {s | (a s).coeff S d ≠ 0}.Finite
\end{lstlisting}

Note that if the set $\sigma$ is finite, to check that $b$ can be substituted into $f$, it is enough to check the nilpotency condition on the coefficients\href{https://github.com/leanprover-community/mathlib4/blob/8a6f6f7dbe193f0f8a6c0a33e3430e70748f96cf/Mathlib/RingTheory/MvPowerSeries/Substitution.lean#L137}{\extlink}. This is in particular true if the constant coefficients of the $b_s$ are zero\href{https://github.com/leanprover-community/mathlib4/blob/8a6f6f7dbe193f0f8a6c0a33e3430e70748f96cf/Mathlib/RingTheory/MvPowerSeries/Substitution.lean#L144}{\extlink}, which is often the case in practical applications.

Substitution of multivariate power series is then defined in \code{MvPowerSeries.subst}\href{https://github.com/leanprover-community/mathlib4/blob/8a6f6f7dbe193f0f8a6c0a33e3430e70748f96cf/Mathlib/RingTheory/MvPowerSeries/Substitution.lean#L153}{\extlink}. As happened with the evaluation, the substitution is only well defined for maps $b \colon \sigma \to B\lbra\tau\rbra$ satisfying \code{HasSubst b}. We provide an extensive API to work with this definition, describing its compatibility with multiple algebraic operations and providing formulas for the coefficients of the substitution. We also provide the specialization of this definition to the case where $f$ is a univariate power series\href{https://github.com/leanprover-community/mathlib4/blob/8a6f6f7dbe193f0f8a6c0a33e3430e70748f96cf/Mathlib/RingTheory/PowerSeries/Substitution.lean#L145}{\extlink}.

\begin{lstlisting}
variable (b : σ → MvPowerSeries τ B) (f : MvPowerSeries σ A)

def MvPowerSeries.subst : MvPowerSeries τ B :=
  letI : UniformSpace A := ⊥ -- discrete uniformity on `A'
  letI : UniformSpace B := ⊥ -- discrete uniformity on `B'
  MvPowerSeries.eval₂ (algebraMap _ _) b f

-- `MvPowerSeries.subst' as an `A'-algebra morphism.
def MvPowerSeries.substAlgHom (hb : HasSubst b) :
    MvPowerSeries σ R →ₐ[R] MvPowerSeries τ S := ...
\end{lstlisting}

\subsection{The Exponential Module}
\label{subsec:exp}

Let $A$ be a ring. A power series $f\in A\lbra X\rbra$ in one indeterminate is said to be \emph{of exponential type} if $f(0)=1$ and if, substituting~$X$ for~$X_0+X_1$ in~$f$, one has the functional relation $f(X_0+X_1)=f(X_0)\cdot f(X_1)$.
We formalize this definition as a structure \code{IsExponential}\href{https://github.com/mariainesdff/divided_powers_journal/blob/78f8956f06816221781a16b5223753c97b9b641c/DividedPowers/ExponentialModule/Basic.lean#L462}{\extlink}.
\begin{lstlisting}[mathescape]
structure IsExponential (f : $A\lbra X\rbra$) : Prop where
  add_mul : subst (S := A) ($X_0$ + $X_1$) f = subst $X_0$ f * subst $X_1$ f
  constantCoeff : constantCoeff A f = 1
\end{lstlisting}

Let $\mathscr E(A)$ be the set of power series of exponential type. For $f,g\in\mathscr E(A)$, one has $f g\in\mathscr E(A)$. Moreover, for every $a\in A$ and every $f\in\mathscr E(A)$, the rescaled power series $f(aX)$ also belongs to~$\mathscr E(A)$. Because of that, the set $\mathscr E(A)$ is an $A$-module:
the addition law of the module is the product of power series, and the external law is given by rescaling.

There is a subtlety to consider in the formalization of the definition of $\mathscr E(A)$. The functional equation $f(X_0+X_1)=f(X_0)\cdot f(X_1)$ satisfied by an exponential power series relates an additive expression to a multiplicative one, so its formalization requires combining additive and multiplicative structures. 
Mathlib provides the definition \code{Additive} to deal with the required conversion: if \code{M} is a type that carries a multiplicative structure, then \code{Additive M} is a new type in bijection with \code{M} that carries the corresponding additive structure. The bijection \code{ofMul : M → Additive M} satisfies \code{ofMul(x * y) = ofMul x + ofMul y} for all \code{x y : M}; its inverse is called \code{toMul}.

We formalize the definition of $\mathscr E(A)$ as an additive submonoid of the type \code{Additive} $A\lbra X\rbra$, which we call \code{ExponentialModule A}\href{https://github.com/mariainesdff/divided_powers_journal/blob/78f8956f06816221781a16b5223753c97b9b641c/DividedPowers/ExponentialModule/Basic.lean#L638}{\extlink}. The underlying set of this submonoid consists of the terms \code{f : Additive} $A\lbra X\rbra$ such that the corresponding power series \code{toMul f :} $A\lbra X\rbra$ is exponential.

\begin{lstlisting}[mathescape]
def ExponentialModule : AddSubmonoid (Additive $A\lbra X\rbra$) where
  carrier := { f : Additive ($A\lbra X\rbra$) | IsExponential (toMul f) }
  add_mem' {f g} hf hg := by simp only [Set.mem_setOf_eq, toMul_add, hf.mul hg]
  zero_mem' := by simp only [Set.mem_setOf_eq, toMul_zero, IsExponential.one]
\end{lstlisting}

The instances \code{instAddCommGroup}\href{https://github.com/mariainesdff/divided_powers_journal/blob/78f8956f06816221781a16b5223753c97b9b641c/DividedPowers/ExponentialModule/Basic.lean#L709}{\extlink} and \code{instModule}\href{https://github.com/mariainesdff/divided_powers_journal/blob/78f8956f06816221781a16b5223753c97b9b641c/DividedPowers/ExponentialModule/Basic.lean#L686}{\extlink} show that \code{ExponentialModule A} is both a commutative additive group and an $A$-module. We also prove that if $f$ is an exponential power series, then $f$ is invertible\href{https://github.com/mariainesdff/divided_powers_journal/blob/78f8956f06816221781a16b5223753c97b9b641c/DividedPowers/ExponentialModule/Basic.lean#L573}{\extlink}, and its inverse, given by the power series $f(-X)$\href{https://github.com/mariainesdff/divided_powers_journal/blob/78f8956f06816221781a16b5223753c97b9b641c/DividedPowers/ExponentialModule/Basic.lean#L576}{\extlink}, is also of exponential type\href{https://github.com/mariainesdff/divided_powers_journal/blob/78f8956f06816221781a16b5223753c97b9b641c/DividedPowers/ExponentialModule/Basic.lean#L587}{\extlink}.

If $(A,I,\gamma)$ is a divided power ideal, then for every $a\in I$, the power series $\exp_I(aX)=\sum\gamma_n(a) X^n$ is of exponential type\href{https://github.com/mariainesdff/divided_powers_journal/blob/78f8956f06816221781a16b5223753c97b9b641c/DividedPowers/Exponential.lean#L35}{\extlink}.
To give an intuitive explanation for this property, recall that for every real number $a$, $\exp(ax)$ is given by the power series $\sum a^n/n! \cdot x^n$.  Moreover, the functional equation of the exponential function implies that $\exp(a (x+y))=\exp(ax) \exp(ay)$. In the algebraic setting, divided powers emulate the functions $a \mapsto a^n/n!$ and the proof  that $\exp_I(aX)$ is of exponential type is established by manipulations of formal power series completely similar to those that establish the functional equation of the exponential power series in analysis.
The map $a\mapsto \exp_I(aX)$
from~$I$ to~$\mathscr E(A)$ is a morphism of $A$-modules\href{https://github.com/mariainesdff/divided_powers_journal/blob/78f8956f06816221781a16b5223753c97b9b641c/DividedPowers/Exponential.lean#L54}{\extlink}.

\begin{lstlisting}[mathescape]
def exp (hI : DividedPowers I) (a : A) : PowerSeries A :=
  PowerSeries.mk fun n $\mapsto$ hI.dpow n a
  
def exp' (hI : DividedPowers I) (a : I) : ExponentialModule A :=
  ⟨hI.exp a, hI.exp_isExponential a.prop⟩
  
def exp'_linearMap (hI : DividedPowers I) : I $\to_l$[A] ExponentialModule A where
  toFun := hI.exp'
\end{lstlisting}

\subsection{Application: Divided Powers on a Sum of Ideals}
\label{subsec:ideal-add}

If $(I, \gamma_I)$ and $(J, \gamma_J)$ are two divided power $A$-ideals with divided powers that agree on $I \cap J$, then the ideal $I + J$ has a unique divided power structure $\gamma_{I + J}$ extending those on $I$ and $J$. The starting point for the construction of this divided power structure
is the construction of the linear map $I+J\to\mathscr E(A)$ that extends
the two linear maps $I\to\mathscr E(A)$ and $J\to\mathscr E(A)$ associated with the divided power structures on these ideals.

\begin{lstlisting}[mathescape, float=th]
def IdealAdd.exp'_linearMap (hIJ : ∀ n, ∀ a ∈ I ⊓ J, hI.dpow n a = hJ.dpow n a) : 
    (I + J) →$_l$[A] (ExponentialModule A) := 
  LinearMap.onSup (f := hI.exp'_linearMap) (g := hJ.exp'_linearMap)
    (fun x hxI hxJ ↦ Subtype.coe_inj.mp
      (Additive.toMul.injective (PowerSeries.ext (fun _ ↦ hIJ x ⟨hxI, hxJ⟩))))
\end{lstlisting}
In order to formalize this definition, we first formalize the more general construction \code{LinearMap.onSup}\href{https://github.com/mariainesdff/divided_powers_journal/blob/78f8956f06816221781a16b5223753c97b9b641c/DividedPowers/ForMathlib/LinearAlgebra/OnSup.lean#L59}{\extlink}, which given two $A$-submodules $M, N$ of an $A$-module $B$, and two $A$-linear maps $f : M \to B$ and $g : N \to B$ that agree on $M \cap N$, is the unique $A$-linear map $M + N \to B$ that simultaneously extends $f$ and $g$.

The divided power structure on $I + J$ is then defined by extending \code{IdealAdd.exp'_linearMap} by zero\href{https://github.com/mariainesdff/divided_powers_journal/blob/78f8956f06816221781a16b5223753c97b9b641c/DividedPowers/IdealAdd.lean#L94}{\extlink}.
\begin{lstlisting}
def IdealAdd.dpow (n : ℕ) := Function.extend 
  -- If `a' is in `I + J',
  (fun a ↦ ↑a : (I + J) → A) 
  -- then `dpow n a' is the nth coefficient of `exp_{I + J} (a)'
  (fun a ↦ coeff A n (exp'_linearMap hIJ a))
  --otherwise, it is zero
  0
\end{lstlisting}

This furnishes the divided power map and also takes care
of the axioms~(i-v) of Definition~\ref{def:dp}.
The verification of the two remaining axioms is essentially
formal. For axiom~(vii), it requires to check a
combinatorial property which is itself proved using 
the theory of divided powers on the $\Q$-algebra $\Q[X]$
of polynomials\href{https://github.com/mariainesdff/divided_powers_journal/blob/78f8956f06816221781a16b5223753c97b9b641c/DividedPowers/IdealAdd.lean#L351}{\extlink}.

Since $\gamma_{I+ J}$ is by construction compatible with $\gamma_I$ and $\gamma_J$, it follows that the identity map on $A$ provides two divided power morphisms $\text{id}_A : (A, I, \gamma_I) \to (A, I +J, \gamma_{I + J})$\href{https://github.com/mariainesdff/divided_powers_journal/blob/78f8956f06816221781a16b5223753c97b9b641c/DividedPowers/IdealAdd.lean#L461}{\extlink}  and
$\text{id}_A : (A, J, \gamma_J) \to (A, I +J, \gamma_{I + J})$\href{https://github.com/mariainesdff/divided_powers_journal/blob/78f8956f06816221781a16b5223753c97b9b641c/DividedPowers/IdealAdd.lean#L465}{\extlink}. Similarly, the compatibility condition implies that both $(I, \gamma_I)$\href{https://github.com/mariainesdff/divided_powers_journal/blob/78f8956f06816221781a16b5223753c97b9b641c/DividedPowers/IdealAdd.lean#L480}{\extlink} and $(J, \gamma_J)$\href{https://github.com/mariainesdff/divided_powers_journal/blob/78f8956f06816221781a16b5223753c97b9b641c/DividedPowers/IdealAdd.lean#L490}{\extlink} are sub-dp-ideals of $(I + J, \gamma_{I+ J})$ .

The fact that $\gamma_{I+ J}$ is the unique divided power structure on $I + J$ which is compatible with both $\gamma_I$ and $\gamma_J$ is proven in the theorem \code {DividedPowers.IdealAdd.dividedPowers_unique}\href{https://github.com/mariainesdff/divided_powers_journal/blob/78f8956f06816221781a16b5223753c97b9b641c/DividedPowers/IdealAdd.lean#L447}{\extlink}.

\section{Discussion}
\label{sec:discuss}

\subsection{Remarks about the Implementation}
\label{subsec:impl}
\subsubsection{The Type of the \texttt{dpow} Map}
\label{subsubsec:dpow}
In Definition \ref{def:dp}, a divided power structure on an ideal $I$ of a ring $A$ is a family of maps $\{\gamma_n : I \to A\}_{n \in \N}$ satisfying properties (i) through (vii). The most literal representation of this family in Lean would be as a function \lstinline{dpow : ℕ → I → A}. However, in the structure \lstinline{DividedPowers}, we instead chose to represent  this family as a map \lstinline{dpow : ℕ → A → A} satisfying (i) through (vii) plus the condition that \lstinline{dpow n x = 0} for all $n \in \N$ and all $x \not\in I$.

By opting for the second definition, we avoid having to provide explicit conversions between types. For example, to define divided powers on the sum of two ideals $I$ and $J$ of $A$, we need to impose the following compatibility condition on the elements of the intersection of these ideals. 
\begin{lstlisting}
∀ (n : ℕ) (a : A), ∀ (ha : a ∈ I ⊓ J), hI.dpow n a = hJ.dpow n a
\end{lstlisting}
If \lstinline{dpow} were instead defined as a map \lstinline{ℕ → I → A}, this would have to be rewritten as
\begin{lstlisting}
∀ (n : ℕ) (a : A), ∀ (ha : a ∈ I ⊓ J), hI.dpow n ⟨a, haI⟩ = hJ.dpow n ⟨a, haJ⟩
\end{lstlisting}
where \lstinline{haI} and \lstinline{haJ} are proofs that $a$ belongs to the ideals $I$ and $J$, respectively.

\subsubsection{Divided Powers on Sums of Ideals}
\label{subsubsec:ideal-add-compare}

If $(I,\gamma_I)$ and $(J,\gamma_J)$ are two divided power ideals of a ring~$A$
such that $\gamma_I$ and $\gamma_J$ coincide on $I\cap J$,
the construction of the divided power map~$\gamma_{I + J}$ on $I+J$
combines the two  $A$-linear maps $I\to \mathscr E(A)$
and $J\to \mathscr E(A)$ to a linear map on~$I+J$.

This approach, while mathematically satisfying, posed several difficulties:
\begin{itemize}
\item
The most important one was the necessity of formalizing the construction
of the exponential module $\mathscr E(A)$, to define evaluation and substitution
of power series. Despite its length, this \emph{detour} brings a worthwhile
addition to the Mathlib library.
\item 
A second one was the absence in the Mathlib library of 
the definitions that construct, given two modules $M$ and~$N$,
two submodules~$P$ and~$Q$ of~$M$ and two linear maps $f\colon P\to N$ 
and $g\colon Q\to N$ that agree on $P\cap Q$, 
a linear map $h\colon P+Q\to N$ that extends~$f$ and~$g$\href{https://github.com/mariainesdff/divided_powers_journal/blob/78f8956f06816221781a16b5223753c97b9b641c/DividedPowers/ForMathlib/LinearAlgebra/OnSup.lean#L59}{\extlink}.
\item
A third one is that this construction only defines the divided power map
on the ideal $I+J$, and one still has to extend it by~$0$.
\end{itemize}

In a first implementation, our construction of the divided power structure on the sum
of two ideals had been more elementary.
It consisted in proving that for $x\in I$ and $y\in J$,
the expression $\sum_{k=0}^n \gamma_{I, k}(x)\gamma_{J, n-k}(y)$
only depends on $x+y$\href{https://github.com/mariainesdff/divided_powers_journal/blob/78f8956f06816221781a16b5223753c97b9b641c/DividedPowers/IdealAdd_v1.lean#L55}{\extlink}; call it $\gamma_{I + J, n}(x+y)$.
Since any element of $I+J$ can be decomposed in this manner,
this defines a function~$\gamma_{I + J,n}$ on $I+J$\href{https://github.com/mariainesdff/divided_powers_journal/blob/78f8956f06816221781a16b5223753c97b9b641c/DividedPowers/IdealAdd_v1.lean#L48}{\extlink}, 
and it remains to prove that the axioms of a divided power structure
are satisfied.

\begin{lstlisting}[mathescape]
def IdealAdd_v1.dpow : ℕ → A → A := fun n => Function.extend 
  -- if `c' is of the form `a + b', for `a' in `I' and `b' in `J',
  (fun ⟨a, b⟩ => (a : A) + (b : A) : I × J → A)
  -- then `dpow n c' is given by $\sum_{k=0}^n \gamma_{I, k}(x)\gamma_{J, n-k}(y)$
  (fun ⟨a, b⟩ => Finset.sum (range (n + 1)) fun k => 
    hI.dpow k (a : A) * hJ.dpow (n - k) (b : A))
  -- otherwise it is 0.
  0
\end{lstlisting}

For comparison purposes, we provide both versions of the implementation of divided powers on sums of ideals: our preferred implementation using the exponential map is in the file \code{IdealAdd.lean}\href{https://github.com/mariainesdff/divided_powers_journal/blob/78f8956f06816221781a16b5223753c97b9b641c/DividedPowers/IdealAdd.lean}{\extlink}, while the older implementation is in \code{IdealAdd_v1.lean}\href{https://github.com/mariainesdff/divided_powers_journal/blob/78f8956f06816221781a16b5223753c97b9b641c/DividedPowers/IdealAdd_v1.lean}{\extlink}.

\subsubsection{Evaluation and Substitution of Power Series}
\label{subsubsec:eval}

Recall that the evaluation of a power series $f \in A\lbra\sigma\rbra$ at a point $b \colon \sigma \to B$ is only mathematically relevant when $b$ satisfies the conditions recorded in the structure \code{MvPowerSeries.HasEval} (see Listing \ref{code:HasEval}).

However, to formalize the evaluation in Lean,
it is more practical to extend it to all $b \colon \sigma \to B$ by providing ``dummy'' values outside the relevant domain of definition. As usual, we choose the value zero, except when \lstinline{f} is (the coercion of) a polynomial,
in which case we can provide the natural evaluation of that polynomial. Since substitution of power series is a special case of evaluation, the same remark applies.

The downside is that the function thus obtained has weak algebraic properties, and this makes
the verification of functional equations somewhat complicated. For instance, when proving theorems that involve nested substitutions, 
one needs to check that the \code{HasSubst} condition is satisfied for all relevant power series. Since these verifications are often mathematically obvious, and performed in a similar way, it would be interesting to implement tactics that automate these tasks.

\subsubsection{Inverting Factorials}
\label{subsubsec:inverse}
We showed in Section \ref{subsec-dp-ex} that, whenever $I$ is an ideal in a commutative ring $A$ and $n$ is a natural number such that $(n-1)!$ is invertible in $A$ and $I^n = 0$,
the family $\gamma_m(x) = \frac{x^m}{m!}$ is a divided power structure on $I$. This is true in particular in the three cases described in Examples \ref{ex:dp-nil}.

The proofs of these examples require to manipulate inverses of factorial elements. For instance, an intermediate step in the proof of \code{dpow_mul_of_add_lt} is to check that the following equality holds in A:
\begin{lstlisting}
Ring.inverse m ! * Ring.inverse k ! = ((m + k).choose m) * Ring.inverse (m + k)!
\end{lstlisting}

In this situation, $m!$, $k!$ and $(m + k)!$ are all invertible in $A$, so an easy algebraic manipulation yields the following equivalent equality.
\begin{lstlisting}
↑(m + k)! = ↑k ! * (↑m ! * ↑((m + k).choose m))
\end{lstlisting}

However, currently this manipulation has to be performed manually. It would be useful to implement a tactic that can remove uses of \code{Ring.inverse} as in the example above (creating side goals to prove the required invertibility hypotheses). The resulting expressions could then, at least in certain cases, be further simplified by using tactics like \code{ring} (which does not deal with inversion) or \code{norm_cast}.

\subsection{Future Work}
\label{subsec:future}

\subsubsection{Divided Powers on Discrete Valuation Rings}\label{par:dvr}
Example \ref{ex:padic} can be generalized to construct divided powers on discrete valuation rings. Indeed, let $R$ be a discrete valuation ring of mixed characteristic $p$ and uniformizer $\pi$ and write $p = u \cdot \pi^e$, where $u \in R$ is a unit and $e \in \N$. Then the ideal $(\pi)$ has divided powers if and only if $e < p$ \cite[\S3, Examples 3]{BO78}. This result relies on the theory of extensions of valuations to ring extensions, which has been formalized in \cite{dFFNMMdC} but is not yet available in Mathlib. 

\subsubsection{The Divided Power Envelope}\label{par:dp-env}

If $(A, I, \gamma)$ is a divided power ring and $J \subseteq B$ is an ideal in an $A$-algebra $B$ such that $I \cdot B \subseteq J$, then there exists a divided power $A$-algebra, called the \emph{divided power envelope of $J$ relative to $(A, I, \gamma)$} and denoted by $D_{B, \gamma}(J)$, with the following universal property: for every divided power morphism $(A, I, \gamma) \to (C, K, \delta)$, there is a one-to-one correspondence between ring homomorphisms $B \to C$ that map $J$ into $K$ and $A$-linear divided power morphisms $D_{B, \gamma}(J) \to (C, K, \delta)$ \cite[Theorem 3.19]{BO78}.

The construction of the divided power envelope relies on a second universal construction, called the \emph{universal divided power algebra} \cite{Roby63, Roby65}. When $M$ is a module over a ring $A$, the universal divided power algebra of $M$ is a graded ring $\Gamma_A (M)$, together with an $A$-linear map $M\to \Gamma_A^1(M)$, whose augmentation ideal $\Gamma_A^+(M)$ has a divided power structure, and which is universal for these properties.

The construction of $\Gamma_A(M)$ as a plain graded algebra, together with the map $M\to \Gamma_A^1(M)$, is relatively straightforward, but it is a difficult theorem of Roby \cite{Roby65} that the augmentation ideal has divided powers. In fact, we discovered a gap in his proof during the formalisation process, which we were able to solve by providing an alternative argument \cite{CLdFF}.

We plan to conclude the formalisation of this construction and that of the envelope in the forthcoming months.
Note that the complete informal proof of the existence of these universal objects expands dozens of pages, with most of the work devoted to showing that the universal divided power algebra of a free module has free graded components.

\subsubsection{The Rings of $p$-adic Periods}\label{par:Hodge}

Once the divided power envelope has been formalized, we will use it to formalize the 
construction of the Fontaine period ring $B_{\crys}$. The definition of this ring is highly technical, and it involves several steps which include defining the $p$-adic complex numbers \cite{dFF23}, taking a ring of Witt vectors \cite{CL21} and constructing the divided power envelope with respect to a certain ring homomorphism.

Fontaine period rings, including $B_{\crys}$, are a fundamental tool in $p$-adic Hodge theory, an active area of research in number theory devoted to the study of $p$-adic Galois representations. Besides being used to detect interesting properties of representations, these period rings are also prominently used in comparison
theorems between different cohomology theories; for instance, the ring $B_{\crys}$ is used in a comparison theorem between \'etale cohomology and crystalline cohomology of $p$-adic varieties. Hence, the formalization of these rings will open the door to formalizing key research in number theory and algebraic geometry.

In particular, as mentioned in the introduction, the work described in this paper is a prerequisite for the formalization of a complete proof of Fermat's Last Theorem. More concretely, this proof relies on an ``$R = T$'' isomorphism between two deformations rings, where the ring $R$ corresponds to crystalline deformations of a given mod $p$-representation.

The ring $B_\crys$ is used to identify crystalline representations; hence, while the formalization of the full theory of crystalline cohomology is not required for the proof of Fermat's Last Theorem, the formalization of the divided power envelope certainly is.



\bibliography{divided_powers}

\end{document}